\documentclass{ifacconf}

\usepackage{graphicx}      
\usepackage{amssymb,latexsym,amsfonts,amsmath,bbm}
\usepackage{dsfont}
\usepackage{mathrsfs}
\usepackage{mathtools}
\usepackage{subfig}
\usepackage{xfrac}
\usepackage{xcolor}
\usepackage{xspace}
\DeclareFontFamily{U}{stix2bb}{}
\DeclareFontShape{U}{stix2bb}{m}{n} {<-> stix2-mathbb}{}

\NewDocumentCommand{\stixbbdigit}{m}{%
	\text{\usefont{U}{stix2bb}{m}{n}#1}%
}
\newcommand{\bbzero}{\stixbbdigit{0}}
\usepackage{natbib}        



\newtheorem{thm}{Theorem}
\newtheorem{cor}{Corollary}
\newtheorem{lem}{Lemma}
\newtheorem{claim}{Claim}
\newtheorem{axiom}{Axiom}
\newtheorem{conj}{Conjecture}
\newtheorem{fact}{Fact}
\newtheorem{hypo}{Hypothesis}
\newtheorem{assum}{Assumption}
\newtheorem{prop}{Proposition}
\newtheorem{crit}{Criterion}

\newtheorem{defn}{Definition}
\newtheorem{exmp}{Example}
\newtheorem{rem}{Remark}
\newtheorem{prob}{Problem}
\newtheorem{prin}{Principle}

\newtheorem{alg}{Algorithm}

\newcommand{\R}{{\mathbb{R}}}
\newcommand{\N}{{\mathbb{N}}}

\newcommand{\Diag}[2]{%
	\underset{ #1\in\N^+}{\operatorname{diag}}(#2)%
}
\newcommand{\conblock}{\begin{bmatrix} P_i^{-1}\\ K_i\end{bmatrix}}
\newcommand{\MATLAB}{\textsc{Matlab}\xspace}

\definecolor{electricviolet}{rgb}{0.56, 0.0, 1.0}
\definecolor{forestgreen(web)}{rgb}{0.13, 0.55, 0.13}
\definecolor{royalfuchsia}{rgb}{0.79, 0.17, 0.57}

\begin{document}
\begin{frontmatter}

\title{Data-Driven Stabilizing Controller Design for Linear Infinite Networks\thanksref{footnoteinfo}}

\thanks[footnoteinfo]{A. Mironchenko has been supported by the Heisenberg grant (MI1886/3-1) of the German Research Foundation (DFG).}

\author[First]{Mahdieh Zaker}, 
\author[Second]{Andrii Mironchenko}, 
\author[First]{Amy Nejati},
\author[First]{Abolfazl Lavaei}

\address[First]{School of Computing, Newcastle University, United Kingdom (e-mails: \{m.zaker2, amy.nejati, abolfazl.lavaei\}@newcastle.ac.uk)}
\address[Second]{Department of Mathematics, University of Bayreuth, Germany (e-mail: andrii.mironchenko@uni-bayreuth.de)}

\begin{abstract}
We propose a direct data-driven method for controller synthesis of infinite networks composed of unknown linear time-invariant subsystems. {Using a single set of noise-corrupted input-state trajectories collected from each subsystem,} and provided that certain linear matrix inequalities hold, each subsystem is rendered exponentially input-to-state stable (eISS) by locally constructing an eISS control Lyapunov function together with an exponentially input-to-state stabilizing feedback controller. We then compose these local components under a compositional small-gain condition in infinite-dimensional spaces to obtain a \emph{global} control Lyapunov function and an associated stabilizing controller, ensuring uniform global exponential stability of the infinite network. The approach is validated on a physical case study with unknown dynamics.
\end{abstract}

\begin{keyword}
Infinite networks, data-driven control, uniform global exponential stability, infinite-dimensional compositional approach, formal methods
\end{keyword}

\end{frontmatter}

\section{Introduction}

Modern engineered systems increasingly operate as large-scale interconnected networks, such as transportation systems, power grids, and cyber-physical platforms. Ensuring stability and safety in such networks remains challenging due to the scalability limitations of existing analysis and control methods. In many practical applications, including connected vehicles and smart cities, subsystems may dynamically join or leave the network, leading to uncertain and time-varying sizes~\citep{bamieh2002distributed,noroozi2022relaxed}. As a result, finite-dimensional models with a prescribed state dimension may not adequately represent the evolving structure and operating conditions of such systems. This motivates the concept of \emph{infinite} networks, in which finite networks of unknown or varying size are over-approximated by countably infinite interconnections of finite-dimensional subsystems~\citep{dashkovskiy2019stability}. The behavior of such systems is governed by subsystem interactions, through which local disturbances may propagate and cause cascading effects, making input-to-state stability (ISS) a fundamental tool for analyzing these phenomena~\citep{Son08}.

Stability results for finite ISS networks do not directly extend to infinite interconnections; for instance, infinite cascades of ISS systems may fail to preserve input-to-state stability. Leveraging advances in infinite-dimensional ISS theory~\citep{chaillet2023iss,karafyllis2019input,schwenninger2020input,mironchenko2020input} and small-gain theory for finite networks~\citep{dashkovskiy2007iss,dashkovskiy2010small}, recent studies have developed small-gain conditions tailored to infinite networks~\citep{kawan2020lyapunov,dashkovskiy2019stability,KMZ23}. However, all these results rely on accurate system models, which are often unavailable in practice, motivating direct data-driven approaches that aim to infer stability guarantees directly from data~\citep{depersis2020tac}.

While promising, most data-driven stability and control methods address \emph{monolithic} systems and are inherently limited to low-dimensional settings~\citep{depersis2020tac,chen2025data,berberich2020data,taylor2021towards}. A smaller body of work combines data-driven techniques with compositional analysis for \emph{finite} interconnected networks using subsystem ISS properties~\citep{lavaei2023ISS,zaker2025hscc,samari2025data}, but these approaches neither scale to infinite networks nor handle noisy data. Only very recently have data-driven methods been proposed for \emph{infinite} networks, focusing primarily on \emph{safety} verification~\citep{aminzadeh2024compositional,zaker2025infinite}. These noise-free methods, however, rely on state-space gridding, incur high sample complexity, and do not address controller synthesis.

\textbf{Contribution.} In contrast to the aforementioned data-driven approaches that primarily target monolithic systems or finite interconnections, we propose a direct data-driven framework for controller synthesis of \emph{infinite} networks composed of unknown linear time-invariant subsystems. Using only a single set of noise-corrupted input-state trajectories collected from each subsystem, we locally construct exponential ISS (eISS) control Lyapunov functions together with exponential input-to-state stabilizing feedback laws, thereby avoiding explicit model identification and restrictive noise-free assumptions. These eISS control Lyapunov functions for subsystems and eISS controllers are then composed upon validity of an infinite-dimensional small-gain condition to obtain a global control Lyapunov function and a stabilizing controller for the overall network. Unlike the few recent data-driven methods for infinite networks that primarily focus on safety verification and rely on computationally expensive state-space gridding, our approach enables controller synthesis with formal guarantees of uniform global exponential stability using a limited amount of data.

\textbf{Notation.}
We denote by $\R$, $\R_0^+$, and $\R^+$ the sets of real, non-negative real, and positive real numbers, respectively, while $\N^+ := \{1,2,\ldots\}$ indicates the set of positive integers.
 A sequence with a countably infinite number of vectors $x_i\in\R^{n_i}$ is written as $x=(x_i)_{i\in\N^+}$. An infinite block matrix with blocks $A_{ij}$ is denoted by $A=(A_{ij})_{i,j\in\N^+}$, while a block diagonal matrix with countably many blocks is written as $B=\Diag{i}{B_i}$.
For a vector $x_i=[x_{1i}~x_{2i}~\ldots~x_{n_ii}]^\top\in\R^{n_i}$
its $p$-norm is written as $|x_i|_p$ and is given by $|x_i|_p=\left(\sum_{j=1}^{n_i}|x_{ji}|^p\right)^{\sfrac{1}{p}}$, where $|\cdot|$ denotes the absolute value of a real number.
The space $\ell^p$ denotes the Banach space of real sequences $x=(x_i)_{i\in\N^+}$ that have a finite norm $\vert x\vert_p = (\sum_{i = 1}^{\infty} \vert x_i\vert_p ^p)^{\sfrac{1}{p}}$. Note that $\ell^2$ is a Hilbert space when endowed with the canonical scalar product. The space $\ell^\infty$ is a Banach space of bounded sequences endowed with the norm $\vert x\vert_\infty \!=\! \sup_{i\in\N^+} \vert x_i\vert<\infty$. Given sets $X_i$, $i\in\N^+$, their Cartesian product is denoted by $\prod_{i\in\N^+} X_i$.

For any symmetric matrix $P$, its smallest and largest eigenvalues are denoted by $\lambda_{\min}(P)$ and $\lambda_{\max}(P)$.  
A matrix $P\in\R^{n\times n}$ is called positive definite (respectively, positive semi-definite), denoted by $P\succ 0$ (respectively, $P\succeq 0$), if it is \emph{symmetric} and all its eigenvalues are strictly positive (respectively, non-negative). For $P\succ 0$, $\sqrt{P}$ denotes the unique positive definite matrix satisfying $\sqrt{P}^2=P$. The transpose of $P$ is denoted by $P^\top$, and $\star$ denotes a transposed component in a symmetric position of a symmetric matrix.
The identity matrix of size $n\times n$ is denoted by $\mathds{I}_n$. The zero matrix in $\R^{n\times m}$ is denoted by $\bbzero_{n\times m}$. The horizontal concatenation of vectors $x_i\in\R^{n_i}$ is written as $[x_1~x_2~\dots~x_N]\in\R^{n_i\times N}$. For a matrix $P$, $\operatorname{rank}(P)$ and $\mathscr{R}(P)$ denote its rank and range, respectively.
A function $\beta:\R_0^+\to\R_0^+$ belongs to class $\mathcal K$ if it is continuous, strictly increasing, and satisfies $\beta(0)=0$. A function $\beta:\R_0^+\times\R_0^+\to\R_0^+$ belongs to class $\mathcal{KL}$ if it is continuous, and for each fixed $s$, the mapping $r\mapsto\beta(r,s)$ is of class $\mathcal K$, and for each fixed $r>0$, the mapping $s\mapsto\beta(r,s)$ is strictly decreasing and satisfies $\beta(r,s)\to 0$ as $s\to\infty$.

\section{Preliminaries}

\subsection{Finite Subsystems}

We commence with introducing the continuous-time linear time-invariant subsystems in the following definition.
\begin{defn}\label{def:ct-LTI}
Let $i \in \N^+$ be given. A continuous-time linear time-invariant (\textsf{ct-LTI}) subsystem is described by
\begin{align}\label{eq:subsystem}
	\Omega_i\!:~ \dot x_i = A_i x_i + B_i u_i + D_i w_i,
\end{align}
where ${x_i} \in \R^{n_i}$ is the state and ${u_i} \in \R^{m_i}$ is the control input to be designed as a state-feedback controller. In addition, $A_i \in \R^{n_i \times n_i}$ and $B_i \in \R^{n_i \times m_i}$ denote the system matrix and the control matrix, respectively. The adversarial input and its associated matrix are defined as
\begin{subequations}\label{eq:interconnection}
	\begin{align}
		w_i &= (w_{ij})_{j\in\mathscr N_i} \in \R^{p_i} = \prod_{j\in \mathscr N_i} \R^{n_j},\label{eq:interconnection-a}\\
		D_i &= (D_{ij})_{j\in\mathscr N_i} \in \R^{n_i \times p_i},\label{eq:D partition}
	\end{align}
\end{subequations}
where, for all $i\in \N^+$, $\mathscr N_i \subset \N^+$ denotes the \emph{finite} set of subsystems influencing $\Omega_i$ and $p_i = \sum_{j\in\mathscr N_i} n_j$. We assume that the dimension of $w_{ij}$ is equal to that of $x_j$ for all $i \in \N^+,j \in \mathscr N_i$, and that $i \notin \mathscr N_i$ for all $i \in \N^+$. Each subsystem is denoted by $\Omega_i = (A_i,B_i,D_i)$.
\end{defn}

The matrices $A_i$ and $B_i$, characterizing the system's behavior, are assumed to be unknown, whereas $D_i$, which models the weights of interconnections, is known. 
\subsection{Infinite Networks}
Given the \textsf{ct-LTI} subsystems as described in Definition~\ref{def:ct-LTI}, we define the corresponding infinite network as follows.
\begin{defn}\label{def:inf net}
	Consider subsystems $\Omega_i = (A_i,B_i,D_i)$, $i \in \N^+$, introduced in Definition~\ref{def:ct-LTI}, together with the interconnection structure in~\eqref{eq:interconnection}, subject to the constraint
	\begin{align}\label{eq:int constraint}
		w_{i j}=x_j,\quad \forall i\in\N^+, \quad \forall j\in \mathscr N_i.
	\end{align}
	The induced \emph{infinite network}, denoted by $\Omega = \mathcal I(\Omega_i)_{i\in\N^+}$, admits the dynamics
	\begin{align}\label{eq:inf network}
		\Omega\!:~ \dot x = A x + B u,
	\end{align}
	where $x = (x_i)_{i\in\N^+}\in \mathds X$ and $u = (u_i)_{i\in\N^+}\in \mathds U$. Here, $B \coloneq \Diag{i}{B_i}$, and $A\coloneq (A_{ij})_{i,j\in\N^+}$, with
	\begin{align}\label{eq:A network}
		A_{ij}&:= \begin{cases}
			A_i, &  i\in\N^+,\;j=i,\\
			D_{ij}, & i\in\N^+,\; j\in\mathscr N_i,\\
			\bbzero_{n_i\times n_j}, &  \text{otherwise},
		\end{cases}
	\end{align}    
	and the state and input spaces for the network being defined as the $\ell^2$ spaces of sequences
	\begin{align*}
		\mathds X &= \Big\{x\,\big |\, x_i \!\in\! \R^{n_i}, \vert x\vert_2 \!=\! \Big(\sum_{i = 1}^{\infty} \vert x_i\vert_2 ^2\Big)^{\sfrac{1}{2}}\!<\infty \Big\}\subset\prod_{i\in\N^+}\R^{n_i},\\
		\mathds U &= \Big\{u \,\big |\, u_i \!\in\! \R^{m_i}, \vert u\vert_2 \!=\! \Big(\sum_{i = 1}^{\infty} \vert u_i\vert_2 ^2\Big)^{\sfrac{1}{2}}\!<\infty \Big\}\subset\prod_{i\in\N^+}\R^{m_i}.\\
	\end{align*}
	The infinite network in~\eqref{eq:inf network} is denoted by $\Omega = (A, B, \mathds X, \mathds U)$.
\end{defn}

We define the space of admissible state-feedback operators as
\begin{align*}
	&\mathcal U_F=
	\big\{K:\mathds X\to \mathds U\mid K \text{ is a bounded linear operator}\big\}.
\end{align*}
For any $K\in\mathcal U_F$, the corresponding feedback controller is given by $u=Kx$.
The closed-loop infinite network under this controller takes form
\begin{align*}
	\Omega_K:\quad \dot x = (A + BK)x,
\end{align*}
which is denoted by $\Omega_K= \mathcal I_K(\Omega_i)_{i\in\N^+}$ and its corresponding tuple is represented by $\Omega_K = (A,B,K,\mathds X,\mathds U)$. Assuming that $A:\mathds X\to \mathds X$, $B:\mathds U\to \mathds X$, and $K:\mathds X\to \mathds U$ are bounded linear operators, the closed-loop infinite network $\Omega_K = (A,B,K,\mathds X,\mathds U)$ is well-posed, as $A+BK$ is a bounded operator generating a uniformly continuous semigroup. For a given initial condition $x(0)$, the corresponding solution is thus $x(t)=e^{(A+BK)t}x(0)$, $t\ge 0$, where $e^{(A+BK)}$ is an operator exponential of $A+BK$.

With the infinite network and its constituent subsystems specified via the interconnection constraint~\eqref{eq:int constraint}, we now present sufficient conditions that guarantee uniform global exponential stability of the network.

\subsection{UGES Property}

\begin{defn}\label{def:UGES}
	The closed-loop network $\Omega_K=\mathcal I_K(\Omega_i)_{i\in\N^+}$ is called uniformly globally exponentially stable (UGES)\footnote{In this paper, UGES may refer either to global exponential stability or uniformly globally exponentially stable, depending on the context.} if there exist $M, \mu>0$ such that for all $x\in \mathds X$, we have
	$$
	\vert x(t) \vert_2 \leq Me^{-\mu t}\vert x(0)\vert_2 ,\quad t\geq 0.
	$$
\end{defn}

Since the primary objective of this work is to stabilize the infinite network $\Omega = \mathcal I(\Omega_i)_{i\in\N^+}$, we formally introduce the notion of uniform global exponential stabilizability.

\begin{defn}
	If there exists \(K\in\mathcal U_F\) such that the feedback controller \(u=Kx\) renders the closed-loop infinite network UGES, then $\Omega=\mathcal I(\Omega_i)_{i\in\mathbb{N}_+}$ is said to be uniformly globally exponentially stabilizable, and $K$ is called an (exponentially) stabilizing feedback operator.
\end{defn}

Let us denote by $\phi_K(t,x)$ the solution of $\Omega_K=(A, B, K, \mathds X,\\ \mathds U)$ at time $t\in\R_0^+$ corresponding to the initial condition $x \in \mathds X$.  
For a continuous function $V: \mathds X \to \R^+_0$, we define the \emph{Lie derivative} $\dot V(x)$ (along $\phi_K$) as the \emph{upper right-hand Dini derivative} of the function $t \mapsto V(\phi_K(t,x))$ at zero:
\begin{align}\label{eq:dini V}
    \dot V(x)    &\coloneq \limsup_{s \to 0} \tfrac{1}{s}\big(V(\phi_K(s,x)) - V(x)\big).
\end{align}
We now define the notion of control Lyapunov function.
\begin{defn}\label{def:CLF}
	A continuous function $V:\mathds X\to\R^+_0$ is called a \emph{control Lyapunov function (CLF)}, if there exist a feedback controller $u=Kx$ with $K\in\mathcal U_F$, and constants $\underline{\alpha},\overline{\alpha},\kappa\in\R^+$, such that for all $x\in\mathds X$:
	\begin{subequations}\label{eq:ct-clf-conds}
		\begin{align}
			\underline{\alpha}\vert x\vert_2^2 \le V(x)\le \overline{\alpha}\vert x\vert_2^2,&\label{eq:ct-clf-1}\\
			\dot{V}(x)\le -\kappa V(x).&\label{eq:ct-clf-2}
		\end{align}
	\end{subequations}
\end{defn}
Utilizing the notion of CLF, in the following theorem, we summarize sufficient conditions for $\Omega=\mathcal I(\Omega_i)_{i\in\N^+}$ to achieve UGES, as defined in Definition~\ref{def:UGES}. This result follows by application of the comparison principle \citep[Proposition A.35]{Mironchenko2023ISS} with a linear $\alpha$ (see also \cite{haidar2022lyapunov} for more results of this type).
\begin{thm}\label{thm:CLF}
	Given an infinite network $\Omega=(A, B, \mathds X, \mathds U)$, if there exist a CLF $V$ and a state-feedback operator $K\in\mathcal U_F$ with its corresponding state-feedback controller $u=Kx$, as defined in Definition~\ref{def:CLF},
	then the closed-loop infinite network $\Omega_K$ is UGES with $M\coloneq\sqrt{\frac{\overline\alpha}{\underline\alpha}}$ and $\mu\coloneq\frac{\kappa}{2}$.
\end{thm}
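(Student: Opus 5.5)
The plan is to fix an arbitrary $x\in\mathds X$ and study the scalar function $v(t)\coloneq V(\phi_K(t,x))$, $t\ge 0$. Two facts make this work. First, $\phi_K(t,x)=e^{(A+BK)t}x$ is continuous in $t$, because $A+BK$ is bounded and generates a uniformly continuous semigroup. Second, $V$ is continuous. Together these give that $v$ is continuous on $\R_0^+$.

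Next we use the semigroup property $\phi_K(t+s,x)=\phi_K(s,\phi_K(t,x))$. Applying \eqref{eq:dini V} at the point $\phi_K(t,x)$, the upper right-hand Dini derivative of $v$ at time $t$ equals $\dot V(\phi_K(t,x))$. Condition \eqref{eq:ct-clf-2} then gives
\begin{align*}
D^+v(t)\le -\kappa\, v(t),\qquad t\ge 0.
\end{align*}

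The key step is to convert this differential inequality for a merely continuous function into the bound $v(t)\le e^{-\kappa t}v(0)$. Here we invoke the comparison principle \citep[Proposition A.35]{Mironchenko2023ISS} with the linear function $\alpha(r)=\kappa r$, whose comparison solution is $e^{-\kappa t}v(0)$.

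If a self-contained argument is preferred, set $w(t)\coloneq e^{\kappa t}v(t)$. Since $e^{\kappa t}$ is $C^1$ and positive, the product rule for Dini derivatives gives
\begin{align*}
D^+w(t)=e^{\kappa t}\big(D^+v(t)+\kappa v(t)\big)\le 0.
\end{align*}
A continuous function with nonpositive upper right Dini derivative everywhere is nonincreasing. This is the classical monotonicity lemma; to be safe, one applies it first to $w(t)-\varepsilon t$, which has strictly negative Dini derivative, and then lets $\varepsilon\to0$. Hence $w(t)\le w(0)$. I expect this step to be the only delicate one, since $V$ need not be differentiable and the argument rests on Dini-derivative monotonicity rather than on the ordinary chain rule.

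Finally, we combine this with the sandwich bounds \eqref{eq:ct-clf-1}:
\begin{align*}
\underline\alpha\,|\phi_K(t,x)|_2^2\le v(t)\le e^{-\kappa t}v(0)\le \overline\alpha\, e^{-\kappa t}|x|_2^2.
\end{align*}
Dividing by $\underline\alpha$ and taking square roots yields
\begin{align*}
|x(t)|_2\le \sqrt{\overline\alpha/\underline\alpha}\;e^{-\frac{\kappa}{2}t}\,|x(0)|_2.
\end{align*}
This is UGES in the sense of Definition~\ref{def:UGES} with $M=\sqrt{\overline\alpha/\underline\alpha}$ and $\mu=\kappa/2$. The constants are uniform in $x$ because $\underline\alpha$, $\overline\alpha$ and $\kappa$ do not depend on $x$.
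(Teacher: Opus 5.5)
Your proposal is correct and follows the paper's route: the paper justifies Theorem~\ref{thm:CLF} only by citing the comparison principle \citep[Proposition A.35]{Mironchenko2023ISS} with the linear function $\alpha(r)=\kappa r$, applied to $t\mapsto V(\phi_K(t,x))$ and combined with the sandwich bounds \eqref{eq:ct-clf-1}, which is exactly your argument. Your self-contained alternative (the Dini product rule for $e^{\kappa t}v(t)$ plus the monotonicity lemma with the $\varepsilon t$ perturbation) is a valid elementary substitute for that cited proposition.
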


Deriving a CLF that certifies UGES for an infinite-dimensional network is, in general, computationally very expensive, even if full knowledge of the system dynamics is available. An effective workaround is to render eISS properties at the subsystem level, and to infer UGES of the overall network via a small-gain-based compositional analysis. In the following definition, we formally define the \emph{eISS control Lyapunov function} for each subsystem.
\begin{defn}\label{def:ISS-Lyap}
	Given a subsystem $\Omega_i=(A_i,B_i,D_i)$ as in~\eqref{eq:subsystem}, a continuously differentiable function $V_i:\R^{n_i}\to\R^+_0$ is an
	{exponential
    ISS (eISS)} control Lyapunov function if there are constants
	$\underline{\alpha}_i,\overline{\alpha}_i,\kappa_i\in\R^+$, and $\rho_i\in\R^+_0$, such that
		\begin{subequations}
			\begin{align}\label{eq:iss-1}
				\underline{\alpha}_i\vert x_i\vert_2^2 \le V_i(x_i)\le \overline{\alpha}_i\vert x_i\vert_2^2,\quad \forall x_i\in\R^{n_i},
			\end{align}
			and for all $x_i\in\R^{n_i}$, there exists $u_i\in\R^{m_i}$, such that for all $w_i\in\R^{p_i}$,
			\begin{align}\label{eq:iss-2}
				\mathcal L V_i(x_i)
				\le -\kappa_iV_i(x_i) + \rho_i\vert w_i\vert_2^2,
			\end{align}
		\end{subequations}
		where
		\begin{align}\label{eq:Lie derivative}
			\mathcal L V_i(x_i)=\frac{\partial V_i(x_i)}{\partial x_i}(A_i x_i + B_i u_i + D_iw_i),
		\end{align}
		is the Lie derivative of $ V_i$ along the dynamics in~\eqref{eq:subsystem}.
\end{defn}

While a CLF for the infinite network $\Omega = \mathcal I(\Omega_i)_{i\in\N^+}$ can, in principle, be obtained by composing eISS control Lyapunov functions of the individual subsystems, this approach becomes impractical when the subsystem dynamics are unknown. In particular, the absence of explicit knowledge of the matrices $A_i$ and $B_i$ in~\eqref{eq:Lie derivative} prevents the direct construction of eISS control Lyapunov functions. To  address this fundamental limitation, we propose a data-driven approach  in the following section.

\section{Main Results}

\subsection{Data-Driven Approach}
Consider an open-loop infinite network $\Omega = (A, B, \mathds X, \mathds U)$ in~\eqref{eq:inf network} subject to the interconnection constraint~\eqref{eq:int constraint}. 
We collect $N \in \mathbb{N}^+$ data samples from the unknown \textsf{ct-LTI} system in~\eqref{eq:subsystem} with sampling time $\tau \in \R^+$. To do so, for each subsystem initialized at $x_i(t_0)$, we apply arbitrary admissible inputs $u_i$ over the period $[t_0, t_0+(T-1)\tau]$ and record the corresponding data matrices $\mathbf U_i$ and $\mathbf X_i$ as
\begin{subequations}\label{eq:data}
\begin{align}
	\begin{array}{llllll}
		\mathbf U_i & \!\!=\! & [u_i(t_0) & u_i(t_0 + \tau) & \dots & u_i(t_0 +(N-1)\tau)]\in\R^{m_i\times N},\\
		\mathbf X_i & \!\!=\! & [x_i(t_0) & x_i(t_0 + \tau) & \dots & x_i(t_0 + (N-1)\tau)]\in\R^{n_i\times N}.
	\end{array}
\end{align}
Since the state data $\mathbf X_i$ of all subsystems are collected, $\mathbf W_i$ can be constructed from the corresponding neighbouring state data as $\mathbf W_i=(\mathbf W_{ij})_{j\in \mathscr N_i}\in\R^{p_i\times N}$, according to~\eqref{eq:int constraint}. Specifically, we have
\begin{align*}
	\mathbf W_{ij}= \mathbf X_j,\quad \forall i\in\N^+, \quad \forall j\in \mathscr N_i,
\end{align*}
and consequently,
\begin{align}
    \begin{array}{llllll}
         \mathbf W_i & \!\!=\! & [w_i(t_0) & w_i(t_0 + \tau) & \dots & w_i(t_0 +(N-1)\tau)]\in\R^{p_i\times N}.
    \end{array}
\end{align}
The data collected in~\eqref{eq:data} constitute what we refer to as a \emph{set of input-state trajectories}. For an infinite network with heterogeneous subsystems, this is understood in a subsystem-wise sense, meaning that one finite data set is collected from each subsystem. In practice, this requirement is naturally reduced for networks with specific structure, such as homogeneous networks or networks composed of finitely many subsystem classes, where one representative data set can be collected for each class.

Let the state derivative data be denoted by
\begin{align*}
	\begin{array}{llllll}
		\hat{\mathbf X}_i^{\mathrm d} & \!\!=\! & [\dot x_i(t_0) & \dot x_i(t_0 + \tau) & \dots & \dot x_i(t_0 + (N-1)\tau)]\in\R^{n_i\times N}.
	\end{array}
\end{align*} 
Since the state derivatives $\hat{\mathbf X}_i^{\mathrm d}$ are not directly accessible, their values are obtained through numerical differentiation. Specifically, for each $k \in \{0, \dots, N-1\}$, the derivative is approximated as 
\[
\dot x_i(t_0 + k\tau) = \frac{x_i(t_0 + (k+1)\tau) - x_i(t_0 + k\tau)}{\tau} + \varepsilon_i(t_0 + k\tau),
\]
where $\varepsilon_i(t_0 + k\tau)$ denotes the resulting approximation error and is interpreted as measurement noise. Consequently, the data available for subsequent analysis satisfy $\hat{\mathbf X}_i^{\mathrm d} =  \mathbf X_i^{\mathrm d} + \mathcal E_i$, with
$$
\mathcal E_i = [\varepsilon_i(t_0)\;\; \varepsilon_i(t_0+\tau)\;\; \dots\; \; \varepsilon_i(t_0 + (N-1)\tau)] \in \R^{n_i \times N}
$$
representing the unknown noise affecting the measurements, and $\mathbf X_i^{\mathrm d}$ defined by
\begin{align}\label{eq:derivative approx data}
	(\mathbf X_i^{\mathrm d})_{k+1} = \frac{x_i(t_0+(k+1)\tau)-x_i(t_0+k\tau)}{\tau},
\end{align}
\end{subequations}
for $k\in\{0,\ldots,N-1\}$, where $(\mathbf X_i^{\mathrm d})_{k+1}$ denotes the $(k+1)$-th column of $\mathbf X_i^{\mathrm d}$. Although $\mathcal E_i$ is unknown, we impose the following assumption.

\begin{assum}\label{asmp:noise assumption}
	The measurement noise data $\mathcal E_i \in \R^{n_i \times N}$ is not available, but it is assumed to admit a known quadratic bound of the form $\mathcal E_i \mathcal E_i^\top \preceq \Psi_i \Psi_i^\top$\!, where $\Psi_i$ of appropriate dimensions is given.
\end{assum}

\begin{rem}\label{rem:noise}
	A valid choice of $\Psi_i$ ensuring that the inequality $\mathcal E_i \mathcal E_i^\top \preceq \Psi_i \Psi_i^\top$ holds  is $\Psi_i \coloneqq \sqrt{\bar{\varepsilon}N}\,\mathds I_{n_i}$, where $\bar{\varepsilon}>0$ is a known constant such that $|\varepsilon_{i,k}|_2^2 \le \bar{\varepsilon}$ for all $k\in\{1,\ldots,N\}$, with $\varepsilon_{i,k}$ denoting each column of $\mathcal E_i$.
\end{rem}
A common choice for eISS control Lyapunov functions for \textsf{ct-LTI} subsystems is \emph{quadratic} functions as $V_i(x_i) = x_i^\top P_ix_i$, with $P_i \succ 0$. We consider the state-feedback controller for each subsystem designed as $u_i = K_iP_ix_i$, with $K_i$ and $P_i$ being constant matrices of appropriate dimensions, and denote $S_i = [A_i~B_i]$. Subsequently, one can represent the closed-loop system \eqref{eq:subsystem} under $u_i$ as
\begin{align}\label{eq:cl-rep}
	A_ix_i + B_iu_i + D_iw_i=S_i\begin{bmatrix}
		\mathds I_{n_i}\\
		K_i P_i
	\end{bmatrix} x_i+ D_i w_i.
\end{align}

We make the following assumption, which establishes a condition for the existence of a feedback gain $K_i$ and a positive-definite matrix $\Lambda_i\coloneq P_i^{-1} \succ 0$, ensuring that each subsystem is eISS.

\begin{assum}\label{asmp:main con}
	Assume that there exist scalars $\kappa_i, \vartheta_i \in \R^+$, $\gamma_i \in \R_0^+$, together with constant matrices $K_i \in \R^{m_i \times n_i}$ and $\Lambda_i \in \R^{n_i \times n_i}$ satisfying $\Lambda_i\succ 0$, such that the following linear matrix inequality (LMI) holds: 
	\begin{align}\label{eq:main con}
		\begin{bmatrix}
			\mathcal Z_i & ~~~~\begin{bmatrix}
				\Lambda_i\\
				K_i
			\end{bmatrix}^\top\!+\gamma_i\tilde{\mathbf X}_i^{\mathrm d}Q_i^\top\\
			\star & -\gamma_i Q_iQ_i^\top
		\end{bmatrix} \preceq 0,
	\end{align}
	where
	$$
	\tilde{\mathbf X}_i^{\mathrm d} = \mathbf X_i^{\mathrm d} - D_i\mathbf W_i,\quad Q_i= [\mathbf X_i^\top~\mathbf U_i^\top]^\top,
	$$
	and
	$$
	\mathcal Z_i = (\vartheta_i + \kappa_i)\Lambda_i-\gamma_i({\tilde{\mathbf X}_i^{\mathrm d}\tilde{\mathbf X}_i^{\mathrm d}}^\top\!\!-\Psi_i\Psi_i^\top).
	$$
\end{assum}
\begin{rem}
	For the proposed LMI in~\eqref{eq:main con} to be feasible, it is necessary that the pair $(A_i,B_i)$ be stabilizable. However, stabilizability alone does not, by itself, guarantee the feasibility of~\eqref{eq:main con}. Indeed, feasibility is also affected by the collected data (cf. Remark~\ref{rem:rank condition}) and by the noise bound $\Psi_i\Psi_i^\top$. Accordingly, the LMI in~\eqref{eq:main con} should be interpreted as a sufficient condition for the construction of an eISS control Lyapunov function and its associated controller.
\end{rem}
By exploiting the data set in~\eqref{eq:data} together with the closed-loop representation in~\eqref{eq:cl-rep}, and considering Assumptions~\ref{asmp:noise assumption}--\ref{asmp:main con}, we establish the following theorem, which guarantees eISS of each subsystem.
	
\begin{thm}\label{thm:main}
	Consider a \textsf{ct-LTI} $\Omega_i=(A_i,B_i, D_i)$, with the closed-loop representation~\eqref{eq:cl-rep}, where matrices $A_i$ and $B_i$ are both unknown. Suppose that Assumptions~\ref{asmp:noise assumption}--\ref{asmp:main con} hold. Then $V_i(x_i) = x_i^\top P_i x_i$
	is an eISS control Lyapunov function, with $P_i=\Lambda_i^{-1}, \underline{\alpha}_i = \lambda_{\min}(P_i)$, $\overline{\alpha}_i = \lambda_{\max}(P_i)$, $ \rho_i =  \frac{\Vert\sqrt{P_i}\Vert_2^2\Vert D_i \Vert_2^2}{\vartheta_i}$, where $\|\cdot\|_2$ denotes the induced 2-norm, and $u_i =K_i P_i x_i$ is its associated eISS controller, rendering $\Omega_i$ exponentially input-to-state stable.
\end{thm}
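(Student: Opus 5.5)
The proof will rest on one observation. The data satisfy an exact linear relation in the unknown matrix $S_i=[A_i~B_i]$, up to the noise $\mathcal E_i$. Multiplying the LMI~\eqref{eq:main con} from the left by $[\mathds I_{n_i}~~S_i]$ and from the right by its transpose should then turn it into a Lyapunov inequality for the closed loop. The noise term is removed using Assumption~\ref{asmp:noise assumption}.

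\emph{Step 1: data relation.} The true derivatives $\hat{\mathbf X}_i^{\mathrm d}$ satisfy the dynamics~\eqref{eq:subsystem} column-wise, so
\[
\hat{\mathbf X}_i^{\mathrm d}=A_i\mathbf X_i+B_i\mathbf U_i+D_i\mathbf W_i=S_iQ_i+D_i\mathbf W_i .
\]
Using $\hat{\mathbf X}_i^{\mathrm d}=\mathbf X_i^{\mathrm d}+\mathcal E_i$, this gives $S_iQ_i-\tilde{\mathbf X}_i^{\mathrm d}=\mathcal E_i$. Hence
\[
(S_iQ_i-\tilde{\mathbf X}_i^{\mathrm d})(S_iQ_i-\tilde{\mathbf X}_i^{\mathrm d})^\top\preceq\Psi_i\Psi_i^\top .
\]

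\emph{Step 2: congruence of the LMI.} Let $M_i$ denote the left-hand side of~\eqref{eq:main con}. Pre- and post-multiplying $M_i\preceq 0$ by $[\mathds I_{n_i}~~S_i]$ and its transpose gives
\[
\mathcal Z_i+S_i\begin{bmatrix}\Lambda_i\\ K_i\end{bmatrix}+\begin{bmatrix}\Lambda_i\\ K_i\end{bmatrix}^\top S_i^\top+\gamma_i\tilde{\mathbf X}_i^{\mathrm d}Q_i^\top S_i^\top+\gamma_iS_iQ_i\tilde{\mathbf X}_i^{\mathrm d\top}-\gamma_iS_iQ_iQ_i^\top S_i^\top\preceq 0 .
\]
Completing the square, the $\gamma_i$-terms together with those inside $\mathcal Z_i$ collapse to $-\gamma_i\big[(S_iQ_i-\tilde{\mathbf X}_i^{\mathrm d})(\cdot)^\top-\Psi_i\Psi_i^\top\big]$. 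By Step 1 this is $\succeq 0$. Writing $A_{cl,i}:=A_i+B_iK_iP_i$, we have $S_i[\Lambda_i;K_i]=A_{cl,i}\Lambda_i$, and therefore
\[
A_{cl,i}\Lambda_i+\Lambda_iA_{cl,i}^\top+(\vartheta_i+\kappa_i)\Lambda_i\preceq 0 .
\]
A congruence with $P_i=\Lambda_i^{-1}$ then yields $A_{cl,i}^\top P_i+P_iA_{cl,i}+(\vartheta_i+\kappa_i)P_i\preceq 0$.

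\emph{Step 3: eISS inequality.} The sandwich bound~\eqref{eq:iss-1} follows immediately from the eigenvalues of $P_i$. For~\eqref{eq:iss-2}, take $u_i=K_iP_ix_i$ and use~\eqref{eq:cl-rep}:
\[
\mathcal LV_i=x_i^\top(A_{cl,i}^\top P_i+P_iA_{cl,i})x_i+2x_i^\top P_iD_iw_i .
\]
Young's inequality applied to $2(\sqrt{P_i}x_i)^\top(\sqrt{P_i}D_iw_i)$ gives
\[
2x_i^\top P_iD_iw_i\le\vartheta_iV_i+\tfrac1{\vartheta_i}\Vert\sqrt{P_i}\Vert_2^2\Vert D_i\Vert_2^2|w_i|_2^2=\vartheta_iV_i+\rho_i|w_i|_2^2 .
\]
Combining this with Step 2 yields $\mathcal LV_i\le-\kappa_iV_i+\rho_i|w_i|_2^2$ for all $w_i$.

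The main obstacle is Step 2. The block sizes and transposes in~\eqref{eq:main con} must line up with the multiplier $[\mathds I~~S_i]$, since the off-diagonal block is $n_i\times(n_i+m_i)$. The square must also be completed with the correct signs, so that the unknown $S_i$ appears only through the noise $\mathcal E_i$. If the signs do not match as written, I would instead apply a Schur complement on the $-\gamma_iQ_iQ_i^\top$ block and argue through a nonstrict S-lemma. The congruence route is the first thing to try, as it avoids any rank condition on $Q_i$.
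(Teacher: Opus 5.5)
Your proof is correct and is essentially the paper's argument: the paper pre- and post-multiplies the LMI~\eqref{eq:main con} by $x_i^\top P_i[\mathds I_{n_i}~~S_i]$ and its transpose, then discards the nonnegative noise term $-\gamma_i z_i(x_i)\ge 0$ (its ``S-procedure'' step). That is exactly your congruence, completing-the-square and drop argument, carried out on quadratic forms rather than matrices. The only differences are cosmetic: you apply the congruence with $P_i$ and Young's inequality after obtaining the closed-loop matrix inequality rather than before, and since your signs do line up, the Schur-complement fallback is not needed.
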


The proof of Theorem~\ref{thm:main} is provided in the Appendix.

\begin{rem}\label{rem:rank condition}
	To preclude a structural obstruction potentially caused by a singular bottom-right block of~\eqref{eq:main con}, it follows that the matrix $Q_i Q_i^\top$ should preferably be full rank, \emph{i.e.}, $Q_i Q_i^\top \succ 0$, which is equivalent to requiring $\operatorname{rank}(Q_i)=n_i+m_i$. Since $\operatorname{rank}(Q_i)\leq \min\{n_i+m_i,\,N\}$, this condition implies the collection of \emph{at least} $N\geq n_i+m_i$ data samples, while the input is persistently exciting~\citep{willems2005note}. In practice, this lower bound may not be enough, and a larger data set is typically required to reliably satisfy the rank condition. We note that condition~\eqref{eq:main con} may still be feasible even if $Q_i Q_i^\top$ is rank-deficient. In such a case, feasibility is ensured only if the off-diagonal block in~\eqref{eq:main con} lies within the range $\mathscr{R}(Q_i)$. This requirement can be restrictive and introduces a nontrivial additional constraint, which motivates our emphasis on ensuring that $Q_i Q_i^\top$ is full rank during the data collection using a persistently exciting input.
\end{rem}

\subsection{Infinite-Dimensional Compositional Reasoning}\label{subsec:compose}
In this section, we establish a sum-type compositional condition under which the data-driven eISS control Lyapunov functions of the individual subsystems collectively induce a CLF for the infinite network $\Omega=(A,B,\mathds X,\mathds U)$. The derivation exploits the interconnection constraint~\eqref{eq:int constraint} and formulates a small-gain requirement expressed through a linear gain operator, inspired by~\cite{kawan2020lyapunov}.

Under Assumptions~\ref{asmp:noise assumption}--\ref{asmp:main con}, Theorem~\ref{thm:main} ensures that each subsystem $\Omega_i$ admits a data-driven eISS control Lyapunov function $V_i(x_i)$  and its associated eISS controller $u_i$, satisfying~\eqref{eq:iss-1}--\eqref{eq:iss-2}, with parameters $\underline{\alpha}_i,\overline{\alpha}_i,\kappa_i\in\R^+$ and $\rho_i\in\R^+_0$, all designed purely based on data. Given these parameters, we introduce the infinite diagonal matrix $\mathcal K \coloneq \Diag{i}{\kappa_i}$ and the infinite matrix $\Delta \coloneq (\delta_{ij})_{i,j\in\N^+}$, where $\delta_{ij}\coloneq \frac{\rho_i}{\underline{\alpha}_j}$ for $j\in\mathscr N_i$, and $\delta_{ij}=0$ otherwise, according to constraint~\eqref{eq:int constraint}. We further define the gain operator
\begin{align}\label{eq:gain operator}
	\Phi \coloneq \mathcal K^{-1}\Delta = (\varphi_{ij})_{i,j\in\N^+}, \quad \varphi_{ij} \coloneq \frac{\delta_{ij}}{\kappa_i}.
\end{align}

We assume that the local Lyapunov and decay parameters are uniformly bounded in the sense that there exist constants $\underline\xi,\overline\xi,\underline\kappa\in\R^+$ and $\overline\rho\in\R_0^+$ such that, for all $i\in\N^+$,
\begin{align}\label{eq:coeffs conditions}
	0<\underline\xi \le \underline{\alpha}_i \le \overline{\alpha}_i \le \overline\xi < \infty,\quad 
	\underline\kappa \le \kappa_i,\quad \rho_i \le \overline\rho,
\end{align}
In addition, we assume that the matrix $\Delta$ satisfies
\begin{align}\label{eq:Delta}
	\|\Delta\|_{1,1} := \sup_{j\in\N^+}\sum_{i\in\N^+}\delta_{ij} < \infty,
\end{align}
which is equivalent to boundedness of the operator $\Delta\!:\ell^1\to\ell^1$. Here, the double subscript indicates that the operator norm is induced by the $\ell^1$-norm on both the domain and co-domain.

Under these assumptions, the operator $\Phi$ defined in~\eqref{eq:gain operator} acts as a linear gain operator $\Phi\!:\ell^1\to\ell^1$ according to
\begin{align}\label{eq:gain operator 1}
	(\Phi x)_i = \sum_{j\in\N^+}\varphi_{ij} x_j, \quad \forall i\in\N^+,
\end{align}
and is bounded on $\ell^1$. We impose the small-gain condition that the \emph{spectral radius} of $\Phi$, denoted by $r(\Phi)$, satisfies
\begin{align}\label{eq:con spectral}
	r(\Phi) < 1.
\end{align}

Under these assumptions, we present the following theorem, which characterizes sufficient conditions for the compositional synthesis of a CLF for $\Omega$ from the data-driven eISS control Lyapunov functions of $\Omega_i$.

\begin{thm}\label{thm:comp}
		Consider the infinite network $\Omega = \mathcal I(\Omega_i)_{i\in\N^+}$ defined in Definition~\ref{def:inf net}. Suppose that, for each $i\in\N^+$, an eISS controller $u_i$ and a corresponding eISS control Lyapunov function $V_i$ are obtained from data via Theorem~\ref{thm:main} under Assumptions~\ref{asmp:noise assumption}--\ref{asmp:main con}. Assume further that conditions~\eqref{eq:coeffs conditions}--\eqref{eq:con spectral} are satisfied. 
        Then, for any $\epsilon>0$, there exists a vector $\eta = (\eta_i)_{i\in\N^+}\in \ell^\infty$ such that $\underline{\eta} \le \eta_i \le \overline{\eta}$ for all $i\in\N^+$, where $\underline{\eta} \coloneq \inf_i(\eta_i)>0$ and $\overline{\eta} \coloneq \sup_i(\eta_i)>0$, together with a constant $\kappa_\infty>0$, satisfying
		$$
		\kappa_\infty \ge (1-r(\Phi))\,\underline\kappa - \epsilon.
		$$
Furthermore,
        \begin{align}\label{eq:V composed}
			V(x) := \sum_{i\in\N^+}\eta_i V_i(x_i), \quad \text{with }x = (x_i)_{i\in\N^+} \in \mathds X,
		\end{align}
		is a CLF for the infinite network in the sense of Theorem~\ref{thm:CLF} under the fully decentralized controller $u = (u_i)_{i\in\N^+}=(K_i P_i x_i)_{i\in\N^+}\in \mathcal U_F$,
        with parameters $\underline{\alpha} := \underline{\eta}\,\underline{\xi}$, $\overline{\alpha} := \overline{\eta}\,\overline{\xi}$, and $\kappa := \kappa_\infty$, where $\underline\xi \coloneq \inf_i(\underline{\alpha}_i)$, $\overline\xi \coloneq \sup_i(\overline{\alpha}_i)$, $\underline\kappa \coloneq \inf_i(\kappa_i)$, and $\overline\rho \coloneq \sup_i(\rho_i)$. Consequently, the infinite network $\Omega_K$ is UGES.
\end{thm}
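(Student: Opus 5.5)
The plan is to build the weights $\eta$ from a positive, almost-eigenvector of the transposed gain operator, in the style of \cite{kawan2020lyapunov}. Since $\Phi:\ell^1\to\ell^1$ is bounded and has nonnegative entries, its adjoint $\Phi^\top:\ell^\infty\to\ell^\infty$ (acting as $(\Phi^\top\eta)_j=\sum_i\varphi_{ij}\eta_i$) is also bounded, has the same spectral radius $r(\Phi)$, and is a positive operator.

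Fix $\epsilon>0$ and choose $\lambda\in(r(\Phi),1)$ close to $r(\Phi)$. Define
\[
\eta:=\sum_{k=0}^{\infty}\lambda^{-k}(\Phi^\top)^k\mathbf 1,\qquad \mathbf 1=(1,1,\ldots)\in\ell^\infty .
\]
The Neumann series converges in $\ell^\infty$ because $r(\lambda^{-1}\Phi^\top)<1$. Its terms give the two-sided bound $1\le\eta_i\le\|(\mathds I-\lambda^{-1}\Phi^\top)^{-1}\|<\infty$, so we may take $\underline\eta\ge 1$ and $\overline\eta<\infty$. The key property is the sub-eigenvector inequality
\[
\Phi^\top\eta=\lambda(\eta-\mathbf 1)\le\lambda\eta
\]
(componentwise), i.e. $\sum_i\eta_i\varphi_{ij}\le\lambda\eta_j$ for all $j$.

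Next, the quadratic bounds. By \eqref{eq:coeffs conditions} and \eqref{eq:iss-1},
\[
\underline\eta\,\underline\xi|x|_2^2\le V(x)\le\overline\eta\,\overline\xi|x|_2^2,
\]
so $V$ is finite and continuous on $\mathds X$. Continuity holds because it is a uniformly convergent sum of quadratics with uniformly bounded matrices $\eta_iP_i$, and in fact $V(x)=\langle x,\mathcal Px\rangle$ with $\mathcal P=\operatorname{diag}(\eta_iP_i)$ bounded.

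For the decay, apply \eqref{eq:iss-2} with $w_{ij}=x_j$, which gives $|w_i|_2^2=\sum_{j\in\mathscr N_i}|x_j|_2^2$. Bounding $|x_j|_2^2\le V_j/\underline\alpha_j$, we get
\[
\mathcal LV_i\le-\kappa_iV_i+\sum_j\delta_{ij}V_j .
\]
Multiplying by $\eta_i$, summing, and exchanging sums (all terms are nonnegative) yields
\[
\dot V(x)\le-\sum_j\Big(\kappa_j\eta_j-\sum_i\eta_i\delta_{ij}\Big)V_j .
\]

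The main obstacle is ensuring that this bracket is at least $\kappa_\infty\eta_j$ with the claimed constant. The problem is that $\delta_{ij}=\kappa_i\varphi_{ij}$ carries the weight $\kappa_i$ rather than $\kappa_j$, so the eigen-inequality for $\Phi^\top$ does not apply directly. To handle this, I would instead build $\eta$ from the weighted vector $\tilde\eta_i=\eta_i/\kappa_i$, using the $\eta$ constructed for $\Phi^\top$. Then
\[
\sum_i\tilde\eta_i\delta_{ij}=\sum_i\eta_i\varphi_{ij}\le\lambda\eta_j=\lambda\kappa_j\tilde\eta_j .
\]
The bracket is therefore at least $(1-\lambda)\kappa_j\tilde\eta_j\ge(1-\lambda)\underline\kappa\,\tilde\eta_j$. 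The weights $\tilde\eta_i$ remain bounded below and above provided $\kappa_i$ is bounded above. If it is not, I would cap $\kappa_i$ at a uniform level, which only weakens \eqref{eq:iss-2} and keeps $r(\Phi)$ controlled. Taking $\lambda\le r(\Phi)+\epsilon/\underline\kappa$ gives $\kappa_\infty=(1-\lambda)\underline\kappa\ge(1-r(\Phi))\underline\kappa-\epsilon$.

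Rigor also requires the Dini derivative of the series. Since $V(x)=\langle x,\mathcal Px\rangle$ and $A+BK$ is bounded, $V$ is Fréchet differentiable along trajectories with
\[
\dot V=2\langle x,\mathcal P(A+BK)x\rangle=\sum_i\eta_i\mathcal LV_i ,
\]
and the interchange is justified by absolute convergence. The boundedness of $A$ and $K$ itself follows from \eqref{eq:coeffs conditions}, \eqref{eq:Delta} and Assumption~\ref{asmp:main con}, which give uniform bounds on the blocks.

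Finally, apply Theorem~\ref{thm:CLF} to conclude that $\Omega_K$ is UGES.
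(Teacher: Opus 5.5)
Your argument is the standard small-gain construction that the paper defers to: a positive sub-eigenvector of the adjoint gain operator built from a Neumann series, summation of the local dissipation inequalities, and then Theorem~\ref{thm:CLF}. The core estimate, including the rescaling $\tilde\eta_i=\eta_i/\kappa_i$ that fixes the $\kappa_i$-versus-$\kappa_j$ mismatch, is correct. Two supporting steps need repair.

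First, boundedness of $A$ and $K$ does \emph{not} follow from \eqref{eq:coeffs conditions}, \eqref{eq:Delta} and Assumption~\ref{asmp:main con}. The bound $\rho_i\le\overline\rho$ only controls $\lambda_{\max}(P_i)\|D_i\|_2^2/\vartheta_i$, and $\vartheta_i$ is unrestricted. Condition \eqref{eq:Delta} constrains the ratios $\rho_i/\underline\alpha_j$, not the blocks $D_{ij}$. Inequality \eqref{eq:iss-2} bounds only the symmetric part of $P_i(A_i+B_iK_iP_i)$ from above. So nothing in these conditions yields $\sup_i\|A_i\|_2<\infty$ or $\sup_i\|K_iP_i\|_2<\infty$. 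You must instead invoke the paper's standing well-posedness assumption that $A$, $B$, $K$ are bounded operators. That assumption is what makes ``$u\in\mathcal U_F$'' hold, and it is also what your Fr\'echet-derivative computation of the Dini derivative relies on.

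Second, once that assumption is in place, your capping device is unnecessary, and as written it is unjustified. Replacing $\kappa_i$ by $\min\{\kappa_i,\bar\kappa\}$ enlarges the entries of $\Phi$, so the spectral radius can only go up. Showing that it rises by less than $\epsilon/\underline\kappa$ needs the norm estimate $\|\Phi_{\bar\kappa}-\Phi\|_{1,1}\le\|\Delta\|_{1,1}/\bar\kappa$ for the capped operator $\Phi_{\bar\kappa}$, together with upper semicontinuity of $r(\cdot)$.

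A simpler fix is to set $w_i=0$ in \eqref{eq:iss-2}. This gives $\kappa_i\le 2(\overline\xi/\underline\xi)\|A_i+B_iK_iP_i\|_2\le 2(\overline\xi/\underline\xi)\|A+BK\|$, so $\kappa_i$ is uniformly bounded and $\inf_i\tilde\eta_i>0$ follows automatically.

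Cleaner still, avoid the rescaling altogether. Since $\Delta^\top\mathcal K^{-1}$ is the adjoint of $\Phi$, and $r(ST)=r(TS)$ for bounded operators, you have $r(\mathcal K^{-1}\Delta^\top)=r(\Delta^\top\mathcal K^{-1})=r(\Phi)$ on $\ell^\infty$. Then $\eta:=\sum_{k\ge 0}\lambda^{-k}(\mathcal K^{-1}\Delta^\top)^k\mathbf 1$ satisfies $\eta\ge\mathbf 1$, $\eta\in\ell^\infty$, and $\Delta^\top\eta\le\lambda\mathcal K\eta$ componentwise. This is exactly the bracket estimate you need, obtained with no upper bound on $\kappa_i$.
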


The proof of Theorem~\ref{thm:comp} follows the same arguments as those used in \cite[Theorem 4.1]{zaker2026data} and is therefore omitted.

\begin{rem}
	The local certificates may be obtained either from the data-driven LMI in Theorem~\ref{thm:main} or, for subsystems with known dynamics, from a corresponding model-based LMI. Hence, the compositional result applies to mixed networks containing both unknown and known subsystems.
\end{rem}
\section{Simulation Results}
In this section, we demonstrate the effectiveness of our data-driven framework by applying it to an infinite network of physical subsystems. The subsystem matrices $A_i$ and $B_i$ are assumed to be unknown, and eISS control Lyapunov functions together with corresponding controllers are synthesized for all subsystems based on collected data via the results of  Theorem~\ref{thm:main}, under Assumptions~\ref{asmp:noise assumption}--\ref{asmp:main con}. These subsystem-level constructions are subsequently combined to obtain a CLF for the infinite network $\Omega = \mathcal I(\Omega_i)_{i\in\N^+}$ by verifying the compositional condition~\eqref{eq:con spectral}. The main LMI~\eqref{eq:main con} is solved in \MATLAB \textsl{R2023b} using YALMIP~\citep{Lofberg2004} with the MOSEK solver~\citep{mosek}, on a MacBook Pro equipped with an Apple~M2~Pro processor and 16~GB of memory.

We apply our approach to an infinite network of identical inverted pendulums coupled by springs, adapted from~\citep{guinaldo2013distributed}, where $\mathscr N_1 = \{2\}$ and $\mathscr N_i = \{i - 1, i + 1\}$ for $i\geq 2$, forming a bidirectional line topology. Each subsystem is described by~\eqref{eq:subsystem} with
$$
A_i = \begin{bmatrix}
		0 & 1\\
		\tfrac{g}{l}-\tfrac{\mathtt{Card}(\mathscr N_i)\,k}{ml^2} & 0
	\end{bmatrix}\!\!, ~ B_i = \begin{bmatrix}
		0\\
		\tfrac{1}{ml^2}
	\end{bmatrix}\!\!,
	$$
where $\mathtt{Card}(\mathscr N_i)$ denotes the cardinality of a set $\mathscr N_i$, $g = 9.8$, $m = 1.5$, $l = 3$, and $k = 2$. Considering $w_i=(w_{ij})_{j\in\mathscr N_i}=(x_j)_{j\in\mathscr N_i}$ and $D_i=(D_{ij})_{j\in\mathscr N_i}$ according to~\eqref{eq:int constraint} and \eqref{eq:D partition}, respectively, where $D_{ij} = \begin{bmatrix}
 	0 & 0\\
 	\tfrac{k}{ml^2} & 0
 \end{bmatrix}$, the influence of other subsystems is captured by
 $$
 D_iw_i = \begin{bmatrix}
 	0\\
 	\sum_{j\in\mathscr N_i}\frac{k}{ml^2}x_{1j}
 \end{bmatrix}\!\!.
 $$
 Here, $A_i$ and $B_i$ are assumed to be unknown and are included only for completeness, whereas $D_{ij}$ is known. As $\mathtt{Card}(\mathscr N_i) \leq 2$ for all $i$, and assuming $A_i$, $B_i$ and $D_i$ are uniformly bounded, the operators $A$ and $B$ remain bounded.
 
We collect $N = 6$ samples with a sampling period of $\tau = 0.1$. Assuming each element of $\mathcal E_i$ lies within the interval $[-0.01, 0.01]$, Remark~\ref{rem:noise} yields $\Psi_i\Psi_i^\top = 0.0012\mathds I_2$. Upon choosing $\kappa_i = 1$ and $\vartheta_i = 2$, we solve the LMI in~\eqref{eq:main con} using the collected data and obtain the eISS control Lyapunov functions as
\begin{align*}
    V_i(x_i)= 7983.0889\,x_{1i}^2 + 7794.0053\, x_{1i}x_{2i} + 2470.3523\,x_{2i}^2,
\end{align*}
and the eISS controllers as
\begin{align}\label{eq:control}
	u_i = -82.1719\, x_{1i} - 45.8755\, x_{2i},
\end{align}
with $\gamma_i = 0.068$, $\underline{\alpha}_i = 453.4412$, $\overline{\alpha}_i = 10^4$, and $\rho_i = 219.4787$. {Employing the gains obtained from data, we examine whether the compositional condition~\eqref{eq:con spectral} holds.} To this end, we construct the infinite gain matrix $\Phi$ according to~\eqref{eq:gain operator}, which takes the following form:
\begin{align}\label{gain}
	\Phi = 
	\begin{bmatrix}
		0 & 0.4840 & 0 & 0 & 0 & \ldots \\
		0.4840 & 0 & 0.4840 & 0 & 0 & \ldots \\
		0 & 0.4840 & 0 & 0.4840 & 0 & \ldots \\
		0 & 0 & 0.4840 & 0 & 0.4840 & \ldots \\
		\vdots & \ddots & \ddots & \ddots & \ddots & \ddots
	\end{bmatrix}\!\!.
\end{align}
To verify the small-gain condition, we simply use the following sufficient condition:
\begin{align}\label{eq:SGC}
	r(\Phi)\leq \|\Phi\|_{1,1} = \sup_{j\in\N^+}\sum_{i\in\N^+}\varphi_{ij} < 1.
\end{align}
Exploiting the specific structure of $\Phi$ in~\eqref{gain} and the fact that all subsystems are identical, we assess condition~\eqref{eq:con spectral}
using~\eqref{eq:SGC}. We obtain the bound $r(\Phi)\leq\|\Phi\|_{1,1} \leq 0.9681 < 1$, which confirms that condition~\eqref{eq:con spectral} holds. Moreover, since $\kappa_i=1$ for all $i\in\N^+$, one has $\Delta=\Phi$. Hence, $\|\Delta\|_{1,1}=\|\Phi\|_{1,1}\le 0.9681<\infty$, which verifies condition~\eqref{eq:Delta}. As a consequence, this enables the construction of the CLF $V(x)=\sum_{i\in\N^+}\eta_i V_i(x_i)$ for $\Omega$, together with its decentralized controller $u=(u_i)_{i\in\N^+}$, thereby ensuring that the infinite network is {UGES}.
\begin{figure}[t!]
	\centering
	\subfloat[\centering Open-loop evolutions\label{fig:IP ol}]{
		\includegraphics[width=0.45\linewidth]{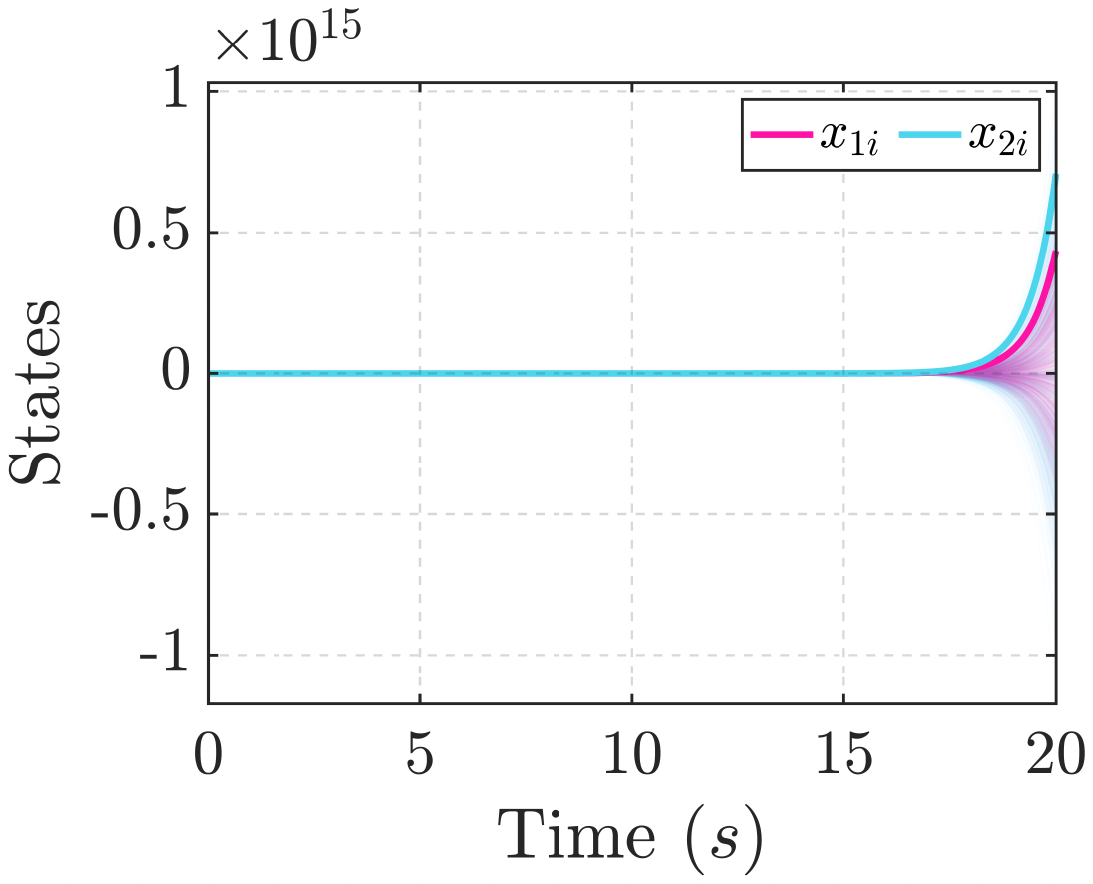}}\hfil
	\subfloat[\centering Closed-loop evolutions\label{fig:IP cl}]{
		\includegraphics[width=0.45\linewidth]{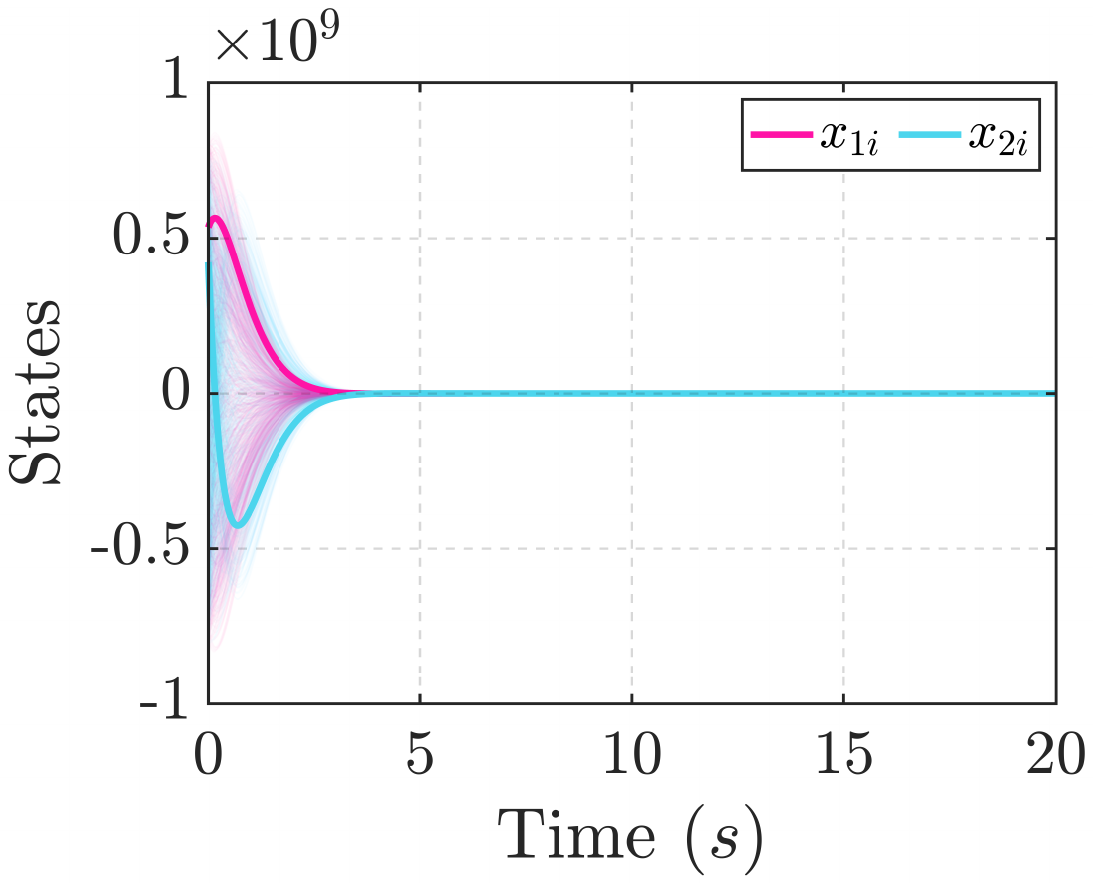}}
	\caption{Illustration of state evolution of some arbitrary subsystems with and without control, where a representative subsystem is highlighted in bold and the remaining subsystems are shown in a faded style.\label{fig:IP}
	}
\end{figure}

We apply the designed stabilizing controller to the network starting from arbitrary and large initial conditions in the interval $[-8\times10^8,8\times10^8]^2$ for finite number of subsystems and $[0~0]^\top$ for the rest. Figure~\ref{fig:IP}(b) exhibits the trajectories of some arbitrary subsystems under the designed controller~\eqref{eq:control}, while Figure~\ref{fig:IP}(a) shows that the open-loop subsystems are unstable, verifying the effectiveness of our data-driven approach.

\bibliography{ifacconf}

\section*{Appendix}

\noindent\textbf{Proof of Theorem~\ref{thm:main}.} We first verify that condition~\eqref{eq:iss-1} holds for the candidate eISS control Lyapunov function. Since
$$
	\lambda_{\min}(P_i) |x_i|_2^2 \leq \underbrace{x_i^\top P_i x_i}_{V_i(x_i)} \leq \lambda_{\max}(P_i) |x_i|_2^2,
$$
selecting $\underline{\alpha}_i=\lambda_{\min}(P_i)$ and $\overline{\alpha}_i=\lambda_{\max}(P_i)$ directly ensures~\eqref{eq:iss-1}.
We next show that condition~\eqref{eq:iss-2} holds, as well. Using the closed-loop representation~\eqref{eq:cl-rep}, we obtain
	\begin{align*}
		\mathcal L V_i(x_i) &= 2x_i^\top P_i\Big(S_i\begin{bmatrix}
			\mathds I_{n_i}\\
			K_i P_i
		\end{bmatrix} x_i+ D_i w_i\Big)\\
		&=2x_i^\top P_iS_i\begin{bmatrix}
			\mathds I_{n_i}\\
			K_i P_i
		\end{bmatrix} x_i+ 2x_i^\top P_iD_iw_i \\
		&=2x_i^\top P_iS_i\conblock P_i x_i+ 2x_i^\top P_iD_iw_i\\
		&=x_i^\top P_iS_i\conblock P_i x_i+ x_i^\top P_i\conblock^{\!\!\top} \!\!\! S_i^\top P_i x_i\\
		&\hphantom{=}~+ 2x_i^\top P_iD_iw_i\\
		& = x_i^\top \!P_i\Big(S_i\!\conblock \!+\! \conblock^{\!\!\top} \!\!\! S_i^\top\Big) P_i x_i+ 2x_i^\top P_iD_iw_i\\
		&= x_i^\top P_i\begin{bmatrix}
			\mathds I_{n_i}\\
			S_i^\top
		\end{bmatrix}^{\!\!\top}\!\!\begin{bmatrix}
			\bbzero_{n_i\times n_i} & \conblock^{\!\!\top}\\
			\star & \bbzero_{s_i\times s_i}
		\end{bmatrix}\!\!\begin{bmatrix}
			\mathds I_{n_i}\\
			S_i^\top
		\end{bmatrix} P_i x_i\\
		&\hphantom{=}~+ 2\underbrace{x_i^\top P_iD_iw_i}_{\clubsuit},
	\end{align*}
where $s_i = n_i + m_i$. We bound the term $\clubsuit$ by applying the Cauchy-Schwarz inequality~\citep{bhatia1995cauchy} in the form $ab \leq |a|_2 |b|_2$ for any $a^\top,b\in\R^n$, with the choices $a=x_i^\top\sqrt{P_i}$ and $b=\sqrt{P_i}D_iw_i$, followed by Young's inequality~\citep{young1912classes} as $|a|_2 |b|_2 \leq \frac{\vartheta_i}{2} |a|_2^2+\frac{1}{2\vartheta_i} |b|_2^2,$ for any $\vartheta_i>0$, resulting in
\begin{align*}
	\mathcal L V_i(x_i)&\leq \overbrace{x_i^\top\! P_i\!\begin{bmatrix}
		\mathds I_{n_i}\\
		S_i^\top
	\end{bmatrix}^{\!\!\top}\!\!\!\begin{bmatrix}
			\bbzero_{n_i\times n_i} & \conblock^{\!\!\top}\\
			\star & \bbzero_{s_i\times s_i}
	\end{bmatrix}\!\!\!\begin{bmatrix}
		\mathds I_{n_i}\\
		S_i^\top
	\end{bmatrix}\!\! P_i x_i}^{h_i(x_i)}\\
	&\hphantom{=}~+ \vartheta_i\underbrace{x_i^\top P_i x_i\vphantom{\frac{\Vert \sqrt{P_i}\Vert_2^2 \Vert D_i\Vert_2^2}{\vartheta_i}}}_{V_i(x_i)} + \underbrace{\frac{\Vert \sqrt{P_i}\Vert_2^2 \Vert D_i\Vert_2^2}{\vartheta_i}}_{\rho_i } |w_i|_2^2\\
	& = \underbrace{h_i(x_i) + \vartheta_iV_i(x_i)}_{\mathcal H_i(x_i)} + \rho_i|w_i|_2^2.
\end{align*}
To conclude~\eqref{eq:iss-2}, it remains to ensure that
\begin{align}\label{eq:new H}
	\mathcal H_i(x_i) \leq -\kappa_i V_i(x_i),
\end{align}
or, equivalently,
\begin{align}\label{eq:new S}
	\mathcal H_i(x_i) + \kappa_iV_i(x_i) \leq 0
\end{align}
holds for some $\kappa_i\in\R^+$. Since $S_i=[A_i~B_i]$ is unknown, this condition cannot be verified directly. However, letting $Q_i=[\mathbf X_i^\top~\mathbf U_i^\top]^\top$, the collected data in~\eqref{eq:data} satisfy
$$
	\mathbf X_i^{\mathrm d} = \underbrace{A_i\mathbf X_i + B_i\mathbf U_i + D_i\mathbf W_i}_{\hat{\mathbf X}_i^{\mathrm d}} - \mathcal E_i= S_iQ_i  + D_i\mathbf W_i - \mathcal E_i,
$$
which implies $\mathcal E_i = -(\underbrace{\mathbf X_i^{\mathrm d} - D_i\mathbf W_i}_{\tilde{\mathbf X}^{\mathrm{d}}_i}-S_iQ_i)$.
Under Assumption~\ref{asmp:noise assumption}, we have
$$
	(\tilde{\mathbf X}_i^{\mathrm d}-S_iQ_i)(\tilde{\mathbf X}_i^{\mathrm d}-S_iQ_i)^\top- \Psi_i\Psi_i^\top\preceq0,
$$
which yields
$$
	\underbrace{\begin{bmatrix}
			\mathds I_{n_i}\\
			S_i^\top
		\end{bmatrix}^{\!\!\top}\begin{bmatrix}
				{\tilde{\mathbf X}_i^{\mathrm d}\tilde{\mathbf X}_i^{\mathrm d}}^\top-\Psi_i\Psi_i^\top & -\tilde{\mathbf X}_i^{\mathrm d}Q_i^\top\\
				\star & Q_iQ_i^\top
		\end{bmatrix}\begin{bmatrix}
			\mathds I_{n_i}\\
			S_i^\top
	\end{bmatrix}}_{\Xi_i}\preceq 0.
$$
Pre-multiplying and post-multiplying $\Xi_i\in\R^{n_i\times n_i}$ by $x_i^\top P_i$ and $P_i x_i$, respectively, results in
\begin{align}\label{eq:proof-new1}
	z_i(x_i)\coloneq x_i^\top P_i \Xi_i P_i x_i\leq 0.
\end{align}
We now use the S-procedure~\citep{yakubovich2004stability} to enforce~\eqref{eq:new S} while respecting the data-consistency constraint~\eqref{eq:proof-new1}. In particular, if there exists $\gamma_i\ge 0$ such that
\begin{align}\label{eq:S-procedure}
	\mathcal H_i(x_i) + \kappa_iV_i(x_i) - \gamma_i z_i(x_i) \leq 0,
\end{align}
then~\eqref{eq:new S} holds for all $x_i$, and all $S_i$ consistent with the data requirement in~\eqref{eq:proof-new1}.
By reformulating $V_i(x_i)$ as
\begin{align*}
	V_i(x_i)& = x_i^\top P_i x_i = x_i^\top \underbrace{P_i P_i^{-1}}_{\mathds I_{n_i}}P_ix_i \\
	&= x_i^\top P_i\begin{bmatrix}
		\mathds I_{n_i}\\
		S_i^\top
	\end{bmatrix}^{\!\!\top}\!\!\begin{bmatrix}
			P_i^{-1} & \bbzero_{n_i\times s_i}\\
			\bbzero_{s_i\times n_i} & \bbzero_{s_i\times s_i}
	\end{bmatrix}\!\!\begin{bmatrix}
		\mathds I_{n_i}\\
		S_i^\top
	\end{bmatrix} P_i x_i,
\end{align*}
and defining $\Lambda_i\coloneq P_i^{-1}$, enforcing condition~\eqref{eq:main con} yields~\eqref{eq:S-procedure}, and therefore~\eqref{eq:new H} holds. Consequently,
	\begin{align*}
		\mathcal L V_i(x_i) &\leq -\kappa_iV_i(x_i)+\rho_i |w_i|_2^2,
	\end{align*}
which completes the proof.\hfill$\blacksquare$

\end{document}